\newcommand{\where}{\mid} % Q = { x \where ...}
\newcommand{\eps}{\varepsilon} % Epsilon
\renewcommand{\phi}{\varphi} % Phi
\renewcommand{\rho}{\varrho} % Rho
\newcommand{\ra}{\rightarrow}      %  ->
\newcommand{\Ra}{\Rightarrow} % =>
\DeclareMathOperator{\aalph}{alph}
\DeclareMathOperator{\card}{card}
\DeclareMathOperator{\inD}{in-d}
\DeclareMathOperator{\outD}{out-d}
\DeclareMathOperator{\frontier}{frontier}
\newcommand{\lfam}[1]{\mathbf{#1}}
\newcommand{\CF}{\lfam{CF}}   % Context-free languages
\newcommand{\CS}{\lfam{CS}}   % Context-sensitive languages
\newcommand{\RE}{\lfam{RE}}   % Recursively enumerable languages
\newcommand{\itparagraph}[1]{\par\medskip\par\noindent\emph{#1. }\ignorespaces}
\newcommand{\ibasis}{\itparagraph{Basis}}
\newcommand{\ihypothesis}{\itparagraph{Induction Hypothesis}}
\newcommand{\istep}{\itparagraph{Induction Step}}
\newtheorem{nclaim}{Claim}{\it}{\rm}
\newtheorem{ndefinition}{Definition}{\bf}{\rm}
\newtheorem{example}{Example}
\newtheorem{theorem}{Theorem}
\newtheorem{lemma}{Lemma}
\title{How to Demonstrate Metalinearness and Regularity\\ by Tree-Restricted General Grammars}
\author{Martin Havel
\institute{Brno University of Technology,\\ Faculty of Information Technology,\\ Brno, Czech republic}
\email{ihavelm@fit.vut.cz}
\and
Zbyněk Křivka
\institute{Brno University of Technology, \\ Faculty of Information Technology,\\Brno, Czech republic}
\email{krivka@fit.vut.cz}
\and
Alexander Meduna
\institute{Brno University of Technology, \\ Faculty of Information Technology,\\Brno, Czech republic}
\email{meduna@fit.vut.cz}
}
\begin{document}
\maketitle

\begin{abstract}
This paper introduces derivation trees for general grammars. Within these trees, it defines context-dependent pairs of nodes, corresponding to rewriting two neighboring symbols using a non context-free rule. It proves that the language generated by a linear core general grammar with a slow-branching derivation tree is $k$-linear if there is a constant $u$ such that every sentence $w$ in the generated language is the frontier of a derivation tree in which any pair of neighboring paths contains $u$ or fewer context-dependent pairs of nodes. Next, it proves that the language generated by a general grammar with a regular core is regular if there is a constant $u$ such that every sentence $w$ in the generated language is the frontier of a derivation tree in which any pair of neighboring paths contains $u$ or fewer context-dependent pairs
of nodes. The paper explains that this result is a powerful tool for showing that certain languages are $k$-linear or
regular.
\end{abstract}
%\newpage

\section{Introduction}
\label{sec:Introduction}
Formal language theory has always intensively struggled to establish conditions under
which general grammars generate a proper subfamily of the family of recursively enumerable languages
because results like this often significantly simplify proofs that some languages are members of the subfamily.
Continuing with this important investigation trend in formal language theory, the present paper establishes another result of this kind based upon a restriction illustrated in Fig. \ref{graph:introduction} placed upon a graph-based representation of derivations in general grammars. 

Concerning general grammars, which generate a proper subfamily of the family of recursively enumerable languages, some results of this kind have been achieved, too. First of all, \cite{Matt64} states that for a grammar, the set of terminal strings generated by left-to-right derivations is context-free. Second, \cite{Matt67} shows that the set of terminal strings generated by two-way derivations is context-free, which is further studied in \cite{Boo73}. Third, \cite{Boo72} demonstrates that a grammar generates a context-free language if the left-hand side of every rule contains only one nonterminal with terminal strings as the only context. Fourth, also \cite{Boo72} shows that if every rule of a general grammar has as its left context a string of terminal symbols at least as long as the right context, then the generated language is context-free. Fifth, \cite{Bak74} demonstrates that a grammar generates a context-free language if the right-hand side of every rule contains a string of terminals longer than any string of terminals between two nonterminals on the left-hand side. For $k$-linear grammar, there is no such study. For regularity, there is the publication \cite{Ehr83}, which shows regularity only in context-free languages.

Finally, Section 2.3.2 in \cite{MEDSOU} demonstrates context-freeness based on the tree restriction with context-dependency. We explain and expand the importance of introduced context-dependency (see Fig. \ref{graph:introduction}) to demonstrate metalinearness and regularity.

\begin{figure}[ht]
\label{graph:introduction}
\centering
\tikzset{
  treenode/.style = {align=center, inner sep=0pt, text centered, font=\sffamily},
  node/.style = {treenode, circle, text width=1.5em, draw=black, thick}
}
\begin{tikzpicture}[->,>=stealth',level/.style={sibling distance = 10cm/#1, level distance = 1cm}]
\tikzstyle{level 1}=[sibling distance=5cm]
\node {$\vdots$}
    child{ node [node,label={[shift={(-1.3,-0.7)}]p},label={[shift={(1.3,-0.7)}]q}] {$r$}
        child{ node {$\vdots$}
            child{ node [node] {} 
                child{ node (k) [node,label=left:{A}] {$k$}
                    child{ node [node,label=left:{C}] {$l$}
                        child{ node {$\vdots$}}
                    }
                }
            }
        }
        child{ node {$\vdots$}
            child{ node (m) [node,label=right:{B}] {$m$}
                child{ node [node,,label=right:{D}] {$n$}
                    child{ node [node] {}
                        child{ node {$\vdots$}}
                    }
                }
            }
        }
    }
;
\draw [dashed,thick,-] (k) -- (m) node[midway,below=10] {context dependency};
\end{tikzpicture}
\caption{Illustration of context dependency in $t$}
\end{figure}

To give an insight into the new result achieved in the present paper, some terminology is first needed to be sketched. We introduce a linear core general grammar $G$
if any $p \in P$ has one of these forms,
\[
    AB \ra CD\textnormal{, }A \ra BC\textnormal{, }A \ra xEy
\]
where $A,B,C,D$ are nonterminals, $E$ is a nonterminal or the empty string, and $x, y$ are strings of terminals.

We define the notion of a derivation tree $t$ graphically representing a derivation in $G$ by analogy with this notion in terms of a $k$-linear grammar (see Definiton \ref{def:k-lin} or Section 6.2 in \cite{Roz97}).
However, in addition, we introduce context-dependent pairs of nodes in $t$ as follows.
In $t$, two paths are neighboring if no other path occurs between them.
Let $p$ and $q$ be two neighboring paths in $t$. Let $p$ contain a node $k$ with a single child $l$,
where $k$ and $l$ are labeled with $A$ and $C$, respectively, and let $q$ contain a node $m$ with a single child $n$,
where $m$ and $n$ are labeled with $B$ and $D$, respectively.
Let this four-node portion of $t$; consisting of $k$, $l$, $m$, and $n$;
graphically represents an application of $AB \ra CD$.
Then, $k$ and $m$ are a context-dependent pair of nodes (see Fig. \ref{graph:introduction}).

The main theorem provided in this paper represents a powerful tool to demonstrate that if a linear core general grammar $H$ generates each of its sentences by a derivation satisfying prescribed conditions
(specifically, one of these conditions requires that there is a positive integer $u$ and any two nonterminal neighboring paths contain no more than $u$ pairs of context-dependent nodes) then the language generated by $H$ is $k$-linear. Similarly, the following theorems provide a tool to demonstrate membership in the regular language family.

\section{Preliminaries}
\label{sec:Preliminaries}
We assume that the reader is familiar with graph theory, including labeled ordered trees and their terminology
(see~\cite{AU, CorLeiRiv2002, MAH}) as well as formal language theory (see~\cite{FlaC,FL,Roz97}).

A directed graph $G$ is a pair $G = (V,E)$, where $V$ is a finite set of nodes
and $E \subseteq V^2$ is a finite set of edges. For a node $v \in V$
a number of edges of the form $(x,v) \in E$ and a number of edges of the form $(v,y) \in E$, for $x,y \in V$,
is called an in-degree of $v$ and an out-degree of $v$, respectively, and denoted by $\inD(v)$, $\outD(v)$.
Let $(v_1,v_2,\dots,v_n)$ be an $n$-tuple of nodes, for some $n \geq 1$, where $v_i \in V$, for $1 \leq i \leq n$,
and there exists an edge $(v_k,v_{k+1}) \in E$, for every pair of nodes $v_k$,$v_{k+1}$, where $1 \leq k \leq n-1$,
then, we call it a sequence of length $n$.
Let $(v_1,v_2,\dots,v_n)$ be a sequence of the length $n$, for some $n \geq 1$,
where $v_i \neq v_j$, for $1 \leq i \leq n$, $1 \leq j \leq n$, then, we call the sequence a path.
Let $(v_1,v_2,\dots,v_n)$ be a path in $G$, for some $n \geq 1$, and $v_1 = v_n$, then we call it a cycle.
A graph $G$ is acyclic iff it contains no cycle.

For a set~$W$, $\card(W)$ denotes its cardinality.
Let $V$ be an alphabet (finite non-empty set). $V^*$ is the set of all strings over V.
Algebraically, $V^*$ represents the free monoid generated by~$V$ under the operation
of concatenation. The unit of~$V^*$ is denoted by~$\eps$.
Set $V^+ = V^* - \{\eps\}$. Algebraically, $V^+$ is thus the free semigroup
generated by~$V$ under the operation of concatenation.
For~$w \in V^*$, $|w|$ denotes the length of~$w$.
The alphabet of $w$, denoted by $\aalph(w)$, is the set of symbols appearing in $w$.
Let $\mathcal{I}$ denote the set of all positive integers.

Let $\Ra$ be a relation over $V^*$. The transitive and transitive and reflexive
closure of $\Ra$ are denoted $\Ra^+$ and $\Ra^*$, respectively.
Unless explicitly stated otherwise, we write $x\ \Ra\ y$ instead of $(x,y) \in\ \Ra$.

The families of context-free, context-sensitive and recursively enumerable languages
are denoted by~$\CF$, $\CS$ and $\RE$, respectively. 

\section{Definitions and Examples}
\label{sec:definitions}

\begin{ndefinition}
\label{Tree}
An \emph{(oriented) tree} is a directed acyclic graph $G = (V,E)$,
with a specified node $r \in V$ called the \emph{root} such that $\inD(r) = 0$, $\inD(x) = 1$,
and there exists a path $(v_1,v_2,\dots,v_n)$, where $v_1 = r$, $v_n = x$, for some $n \geq 1$,
for all $x \in V - \{r\}$. For $v,u \in V$, where $(v,u) \in E$, $v$ is called a \emph{parent} of $u$,
$u$ is called a child of $v$, respectively. For $v,u,z \in V$, where $(v,u),(v,z) \in E$,
$u$ is called a \emph{sibling} of $z$ and vice versa.

A tree is called \emph{labeled}, if there exist a set of labels $\mathcal{L}$ and a total mapping $l: V \ra \mathcal{L}$.

An \emph{ordered tree} $t$ is a tree, where for every set of siblings there exists a linear ordering.
Let $o$ has the children $n_1$,$n_2$,\dots,$n_r$ ordered in this way, where $r \geq 1$.
Then $n_1$ is the \emph{leftmost child} of $o$, $n_r$ is the \emph{rightmost child} of $o$
and $n_i$ is the \emph{direct left sibling} of $n_{i+1}$, $n_{i+1}$ is the \emph{direct right sibling} of $n_i$,
$1 \leq i \leq r - 1$, and for $j < k$,
$n_j$ is \emph{left sibling} of $n_k$ and $n_k$ is \emph{right sibling} of $n_j$,
$1 \leq j \leq r$, $1 \leq k \leq r$.

Let $t$ be a labeled ordered tree, and let $t$ contain node $o$.
Let $\alpha = (o,m_1,m_2,\dots,m_r)$, and $\beta = (o,n_1,n_2,\dots,n_s)$
be two paths in $t$, for some $r,s \geq 1$, such that $o$ is the parent of $m_1$ and $n_1$, while
\begin{enumerate}
    \item $m_1$ is the direct left sibling of $n_1$;
    \item $m_i$ is a nonterminal child of $m_{i-1}$, while all its right siblings are terminal siblings,
          $2 \leq i \leq r - 1$,
          $n_j$ is a nonterminal child of $n_{j-1}$, while all its left siblings are terminal siblings,
          $2 \leq j \leq s - 1$;
    \item if $m_r$ is a terminal node, then all its siblings are terminal nodes;
          otherwise, all its right siblings are terminal siblings;
    \item if $n_s$ is a terminal node, then all its siblings are terminal nodes;
          otherwise, all its left siblings are terminal siblings;
\end{enumerate}
Then, $\alpha$ and $\beta$ are two \emph{nonterminal neighboring paths} in $t$,
$\alpha$ is a \emph{left nonterminal neighboring path} to $\beta$,
and $\beta$ is a \emph{right nonterminal neighboring path} to $\alpha$.

\end{ndefinition}

Next, we define the notion of a general grammar, also known as that of a type-0 grammar or that of a phrase-structure grammar in the literature.

\begin{ndefinition}
\label{def:lc}
A \emph{general grammar} (GG) $G$ is a quadruple $G = (V$, $T$, $P$, $S)$,
where $V$ is a \emph{total alphabet}, $T \subset V$ is a \emph{terminal alphabet},
$P$ is a finite \emph{set of rules} of the form $x \ra y$,
where $x,y \in V^*$, $\aalph(x) \cap (V - T) \neq \emptyset$,
$S \in V - T$ is a \emph{start symbol}.
For every $u,v \in V^*$ and $x \ra y \in P$, $uxv \Ra uyv[p]$ or simply $uxv \Ra uyv$
is a \emph{derivation step} of $G$ from $uxv$ to $uyv$ by the rule $x \ra y$,
$\Ra$ is the \emph{direct derivation} relation.
Let $w_0,w_1,\dots,w_n \in V^*$, for some $n \geq 1$, such that $w_0 \Ra w_1\,[p1] \Ra \dots \Ra w_n\,[p_n]$,
where $p_i \in P$, for all $i = 1,\dots,n$, then, $w_0 \Ra^n w_n$;
based on $\Ra^n$, we define $\Ra^+$ and $\Ra^*$.

A \emph{language} of $G$ is $L(G) = \{w \in T^* \where S \Ra^* w\}$.
$G$ is \emph{propagating} if $A \ra x \in P$ implies $x \neq \eps$.
$G$ is \emph{context-free} if $A \ra x \in P$ implies $A \in V - T$.
$G$ is \emph{linear core} GG if any $p \in P$ has one of these forms:
\[
    AB \ra CD\textnormal{, }A \ra BC\textnormal{, }A \ra xEy
\]
where $A,B,C,D \in V - T$, $E \in (V - T) \cup \{\eps\}$, $x,y \in T^*$.
In what follows, unless explicitly stated otherwise, we assume that every GG is a linear core GG.\\
Similary, $G$ is \emph{left linear core} GG if any $p \in P$ has one of these forms (see \cite{Kur}):
\[
    AB \ra CD\textnormal{, }A \ra BC\textnormal{, }A \ra xE
\]
where $A,B,C,D \in V - T$, $E \in (V - T) \cup \{\eps\}$, $x \in T^*$.\\
 $G$ is GG in the \emph{Kuroda normal form} (KNF) \cite{Med20} if every rule is one of these forms:
\[
    AB  \ra CD\textnormal{, }A \ra BC\textnormal{, }A \ra B\textnormal{, }A \ra a\textnormal{, }A \ra \varepsilon
\]
where $A,B,C,D \in V - T$, $a \in T$.
\end{ndefinition}
As obvious, all rules of the form of $A \ra B$ can be always removed from $G$ without disturbing $L(G)$. Next, we show that the proposed linear core grammars have the same generative power as GGs.
\begin{lemma}
\label{th:GG:RE}
A language $L$ is recursively enumerable iff $L = L(G)$, where $G$ is a linear core general grammar.
\end{lemma}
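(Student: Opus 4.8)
The plan is to prove the two implications separately, treating the ``if'' direction as essentially immediate and concentrating the work on the ``only if'' direction via a known normal form.

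First, for the ``if'' direction, I would observe that a linear core GG is nothing but a particular kind of general grammar: each of its three rule schemata $AB \ra CD$, $A \ra BC$, and $A \ra xEy$ has a left-hand side containing at least one nonterminal, so it satisfies the defining condition on rules in Definition \ref{def:lc}. Since general grammars (type-0 grammars) characterize exactly $\RE$, any language $L = L(G)$ with $G$ a linear core GG is recursively enumerable, and nothing further is required.

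For the ``only if'' direction, I would start from the well-known fact that every recursively enumerable language is generated by a grammar in the Kuroda normal form (KNF), recorded in the excerpt with reference \cite{Med20}. So let $L \in \RE$ be arbitrary and fix a KNF grammar $G' = (V, T, P, S)$ with $L(G') = L$. The goal is then to argue that $G'$ is \emph{already} a linear core GG, so that no construction is needed beyond a re-reading of its rules. The central step is a case analysis showing that each of the five KNF rule forms is a special case of one of the three linear core forms. The two rules $AB \ra CD$ and $A \ra BC$ coincide verbatim with the first two linear core schemata. The remaining three KNF rules all fall under the schema $A \ra xEy$ with $x, y \in T^*$ and $E \in (V-T) \cup \{\eps\}$: namely $A \ra a$ arises by taking $x = a$, $E = \eps$, $y = \eps$; $A \ra \eps$ arises by taking $x = y = \eps$ and $E = \eps$; and $A \ra B$ arises by taking $x = y = \eps$ and $E = B$. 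As noted immediately before the statement, the unit rules $A \ra B$ may alternatively be eliminated altogether without changing $L(G')$, so they need not even be invoked. Hence every rule of $G'$ is a linear core rule, $G'$ is a linear core GG, and $L = L(G')$ as required.

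I expect the main point — more a conceptual observation than a genuine obstacle — to be recognizing that the seemingly restrictive schema $A \ra xEy$ is in fact broad enough to absorb the terminal-producing, erasing, and unit KNF rules simultaneously, so that the whole argument reduces to unpacking the definitions; the only external ingredient is the KNF characterization of $\RE$. If one preferred not to cite KNF directly, the same effect could be obtained by first converting an arbitrary type-0 grammar for $L$ into KNF by the standard rule-splitting construction and then applying the case analysis above, but invoking \cite{Med20} keeps the argument short.
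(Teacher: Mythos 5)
Your proposal is correct and takes essentially the same route as the paper: the ``only if'' direction is exactly the paper's argument that every Kuroda-normal-form grammar is already a linear core GG by Definition \ref{def:lc} (which you usefully make explicit by checking that $A \ra a$, $A \ra \eps$, and $A \ra B$ are all instances of the schema $A \ra xEy$). Your ``if'' direction---observing that a linear core GG is simply a general grammar and hence generates an $\RE$ language---is, if anything, more direct than the paper's phrasing via trivial conversion to KNF, but it is the same one-line observation.
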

\begin{proof}
Every language $L$ generated by a linear core GG $G$ is recursively enumerable, because every general linear core grammar can be trivially converted to KNF. In other direction, every KNF $G$ is a linear core GG by Definition \ref{def:lc}.

\end{proof}

\begin{lemma}
\label{th:pGG:CS}
A language $L$ is context-sensitive iff $L = L(G)$,
where $G$ is a propagating linear core general grammar.
\end{lemma}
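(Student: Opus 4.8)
The plan is to follow the pattern of the proof of Lemma~\ref{th:GG:RE}, substituting the recursively-enumerable / Kuroda-normal-form correspondence by the context-sensitive / monotone-grammar correspondence, which is classical. In both directions the linear core forms are so restrictive that the grammar-specific work reduces to a rule-by-rule inspection, and the real content is imported from the known characterization of $\CS$ by noncontracting (monotone) grammars.

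For the \emph{if} part, I would take a propagating linear core GG $G$ and argue that it is monotone. Among the three admissible rule forms, the only one that can carry an empty right-hand side is $A \ra xEy$ with $x = y = \eps$ and $E = \eps$, i.e.\ $A \ra \eps$; the propagating condition forbids exactly this. Hence every rule is non-erasing, and inspecting the three forms shows that no rule shortens a sentential form (indeed $|CD| = |AB|$, $|BC| > |A|$, and $|xEy| \geq 1 = |A|$), so $G$ is monotone. Since every derivation starts from the length-one string $S$ and no step decreases length, I also get $\eps \notin L(G)$, as required of a member of $\CS$. The classical equivalence of monotone with context-sensitive grammars then gives $L(G) \in \CS$.

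For the \emph{only if} part, let $L \in \CS$. The same classical equivalence yields a monotone grammar for $L$, which I would place into the propagating Kuroda normal form, i.e.\ the KNF of Definition~\ref{def:lc} with the erasing rule $A \ra \eps$ omitted, leaving only $AB \ra CD$, $A \ra BC$, $A \ra B$, and $A \ra a$. Each such rule is a propagating linear core rule: $AB \ra CD$ and $A \ra BC$ are literally of the required shape, $A \ra a$ is the instance $A \ra xEy$ with $x = a$, $E = \eps$, $y = \eps$, and $A \ra B$ is the instance $A \ra xEy$ with $x = y = \eps$, $E = B$ (and may anyway be removed, as noted right after Definition~\ref{def:lc}). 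The resulting grammar is therefore a propagating linear core GG generating $L$.

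The routine rule-by-rule bookkeeping is not where the difficulty lies; the substantive ingredient is the cited equivalence between monotone and context-sensitive grammars. The one point deserving real care is the treatment of the empty string: I must confirm that here ``propagating'' genuinely coincides with ``length-non-decreasing,'' so that $\eps \notin L(G)$ on the \emph{if} side and so that no $A \ra \eps$ rule is ever needed in the propagating KNF on the \emph{only if} side---precisely the $\eps$-free convention under which $\CS$ is defined.
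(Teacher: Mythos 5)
Your proposal is correct, and for the direction it shares with the paper it is the same argument: the paper's entire proof of Lemma~\ref{th:pGG:CS} is the one-line observation that every rule $x \ra y$ of a propagating linear core GG satisfies $|x| \leq |y|$, i.e., the grammar is monotone, hence $L(G) \in \CS$ --- exactly your \emph{if} direction, which you work out in more detail (checking the three rule forms and the $\eps$ issue explicitly). Where you genuinely differ is completeness: the paper stops after that single direction and never proves the \emph{only if} half of the stated equivalence, whereas you supply it by taking a monotone grammar for $L$, passing to the propagating Kuroda normal form (KNF of Definition~\ref{def:lc} without $A \ra \eps$), and verifying rule-by-rule that each such rule is a propagating linear core rule --- which mirrors how the paper itself argues the corresponding direction of Lemma~\ref{th:GG:RE} via KNF, so your route is the natural completion of the paper's own pattern. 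Your explicit care about the $\eps$-convention (that ``propagating'' forbids precisely $A \ra \eps$, so $\eps \notin L(G)$, matching the convention under which $\CS$ is defined) is also warranted and is absent from the paper. In short, your proof is not merely correct; it is strictly more complete than the paper's, which as written establishes only one implication of the ``iff.''
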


\begin{proof}
Every language $L$ generated by propagating linear core GG $G$ is context sensitive, because each rule in $P$, where $P \in G$, is a form of $x \rightarrow y$ and $|x| \leq |y|$.

\end{proof}
\begin{ndefinition}
\label{def:k-lin}
A \emph{linear grammar} $G$ is a GG $G = (V$, $T$, $P$, $S)$,
where $P$ contains rules of the form:
\[
    A \ra x
\]
where $A \in (V-T)$, $x \in T^*((V-T) \cup \{\varepsilon\})T^*$.
A language is \emph{linear} (1-linear) if it can be generated by a linear grammar.
The concept of a linear grammar can be generalized:
A \emph{$k$-linear grammar} $G$ is a GG $G = (V$, $T$, $P$, $S)$,
where $P$ is a finite set of rules of  the form:
\[
A \ra x\textnormal{, }A \ra xBy\textnormal{, }S \ra W
\]
where $A,B \in (V-T)$, $x,y \in T^*$, $W \in (V-(T \cup \{S\}))^k$.
A language is said to be \emph{$k$-linear} if it can be generated by a $k$-linear grammar.
A language is said to be \emph{metalinear} if it is $k$-linear for some positive integer $k$.

\end{ndefinition}
%Note that any word $W$ in a $k$-linear grammar could be generated by rules $A \ra BC$, where $A,B,C \in (V-T)$ .
\begin{ndefinition}
Let $G = (V, T, P, S)$ be a linear core GG without rules of the form $AB \ra CD$.
Let $w \in T^*$ be a string derived from $G$.
A derivation tree for $w$ is a labeled tree $\tau$ such that:
\begin{enumerate}
    \item The root of $\tau$ is labeled with $S$.
    \item Each leaf of $\tau$ is labeled with a symbol from $T$.
    \item Each internal node of $\tau$ is labeled with a symbol from $V$.
    \item If an internal node $v$ is labeled with $A \in V$ and has children labeled $B_{1}$, $B_{2}$, then there exists a rule $A \ra B_{1}B_{2}$ in $P$ and, analogically, for the rest of the rules of a linear core GG without rules of the form $AB \ra CD$.
    \item The yield of $\tau$ (that is, the concatenation of the labels on its leaves), denoted by $\frontier(\tau)$, is $w$.
\end{enumerate}

\end{ndefinition}

\begin{example}
\label{ex:labeled:ordered:tree}
%TODO bugged ref \ref{graph:KNFtree}
The following graph (Fig. \ref{graph:KNFtree}) represents a labeled ordered tree $t$ for a GG in KNF.
Since any two distinct nodes have different labels, we refer to their
labels below.
The root node $\hat{r}$ is $a$. It has no parent and two children $b$ and $c$.
Then $b$ is a sibling of $c$ and $c$ is a sibling of $b$.
The leftmost child of $b$ is $d$, while the rightmost is $e$.
The node $d$ is a left sibling of $e$.
The node $d$ is the parent of $h$, but $h$ has no child, so it is a leaf node.
$horsm = \frontier(t)$. Consider the node $e$.
The nodes $a$ and $b$ are predecessors of $e$,
while $i$, $j$, $o$, $p$, and $r$ are descendants of $e$.
The nodes $c$ or $d$ are not in predecessor relation with $e$,
as they are neither predecessors of $e$, nor descendants of $e$.
The sequence of nodes $bejpr$ is a path in $t$.
The path $cglqs$ is neighboring to $bejpr$; unlike $acglqs$, $eio$ or $bdh$.

\begin{figure}[H]

\centering
\tikzset{
  treenode/.style = {align=center, inner sep=0pt, text centered, font=\sffamily},
  node/.style = {treenode, circle, text width=1.5em, draw=black, thick}
}
\begin{tikzpicture}[->,>=stealth',level/.style={sibling distance = 4cm/#1, level distance = 0.8cm}]
\tikzstyle{level 1}=[sibling distance=5cm]
\node [node] {$a$}
    child{ node [node] {$b$} 
        child{ node [node] {$d$} 
            child{ node [node] {$h$}}
        }
        child{ node [node] {$e$}
            child{ node [node] {$i$}
                child{ node [node] {$o$}}
            }
            child{ node [node] {$j$}
                child{ node [node] {$p$}
                    child{ node [node] {$r$}}
                }
            }
        }
    }
    child{ node [node] {$c$}
        child{ node [node] {$g$}
            child{ node [node] {$l$}
                child{ node [node] {$q$}
                    child{ node [node] {$s$}}
                }
            }
            child{ node [node] {$m$}}
        }
    }
;
\end{tikzpicture}
\caption{Labeled ordered tree $t$}
\label{graph:KNFtree}
\end{figure}
\end{example}

\begin{ndefinition}
Let $G = (V,T,P,S)$ be a linear core GG.
\begin{enumerate}
    \item For $p\colon A \ra x \in P$, $A\langle x\rangle$ is the \emph{rule tree} that represents $p$.
    \item\label{DT:2} The \emph{derivation trees} representing the derivations in $G$ are defined recursively as follows:
    \begin{enumerate}
        \item One-node tree with a node labeled $X$ is the derivation tree corresponding to $X \Ra^0 X$ in $G$,
              where $X \in V$. If $X = \eps$, we refer to the node  
              labeled $X$ as \emph{$\eps$-node} (\emph{$\eps$-leaf}); otherwise, we call it
              \emph{non-$\eps$-node} (\emph{non-$\eps$-leaf}).
        \item Let $d$ be the derivation tree with $\frontier(d) = uAv$ representing $X \Ra^* uAv\,[\rho]$
              and let $p\colon A \ra x \in P$. The derivation tree that represents
              \[
                X \Ra^* uAv\,[\rho] \Ra uxv\,[p]
              \]
              is obtained by replacing the $i$th non-$\eps$-leaf in $d$ labeled $A$, with the rule tree corresponding to $p$, $A\langle x\rangle$, where $i = |uA|$.
        \item\label{DT:2:3}
              Let $d$ be the derivation tree with $\frontier(d) = uABv$ representing $X \Ra^* uABv\,[\rho]$
              and let $p\colon AB \ra CD \in P$. The derivation tree that represents
              \[
                X \Ra^* uABv\,[\rho] \Ra uCDv\,[p]
              \]
              is obtained by replacing the $i$th and $(i+1)$th non-$\eps$-leaf in $d$ labeled $A$ and $B$
              with $A\langle C\rangle$ and $B\langle D\rangle$, respectively, where $i = |uA|$.
    \end{enumerate}
    \item A \emph{derivation tree} in $G$ is any tree $t$
          for which there is a derivation represented by $t$ (see item \ref{DT:2} in this definition).
\end{enumerate}

Note that the figure to illustrate the definition is postponed to Example \ref{ex:GG}. Moreover, after replacement in \ref{DT:2:3},
the nodes $A$ and $B$ are the parents of the new leaves $C$ and $D$, respectively,
and we say that $A$ and $B$ are \emph{context-dependent}, alternatively speaking, we say
that there is a context dependency between $A$ and $B$.
In a derivation tree, two nodes are \emph{context-independent} if they are not context-dependent.

Then, for any $p\colon A \ra x \in P$, $\prescript{}{G}\triangle(p)$ denotes the rule tree corresponding to $p$.
For any $A \Ra^* x\,[\rho]$ in $G$, where $A \in V-T$, $x \in V^*$, and $\rho \in P^*$,
$\prescript{}{G}\triangle(A \Ra^* x\,[\rho])$ denotes one of the derivation trees corresponding to $A \Ra^* x\,[\rho]$.
Just like we often write $A \Ra^* x$ instead of $A \Ra^* x\,[\rho]$, we sometimes simplify
$\prescript{}{G}\triangle(A \Ra^* x\,[\rho])$ to $\prescript{}{G}\triangle(A \Ra^* x)$ in what follows
if there is no danger of confusion.
Let $\prescript{}{G}\blacktriangle$ denote the set of all derivation trees in $G$.
Finally, by $\prescript{}{G}\triangle_x \in \prescript{}{G}\blacktriangle$,
we mean a derivation tree whose frontier is $x$, where $x \in L(G)$.

If a node is labeled with a terminal, it is called a \emph{terminal node}.
If a node is labeled with a nonterminal, it is called a \emph{nonterminal node}.
Analogously, we define the notions of a \emph{terminal child}, \emph{nonterminal child},
\emph{terminal sibling}, \emph{nonterminal sibling}. 
If a node is labeled with a nonterminal and has two \emph{nonterminal node} children, it is called a \emph{branching nonterminal node}.
Let $\alpha = (o,m_1,m_2,\dots,m_r)$ and $\beta = (o,n_1,n_2,\dots,n_s)$ be two neighboring paths,
where $r,s \geq 0$, $\alpha$ is the left neighboring path to $\beta$, and $m_r$ and $n_s$ are terminal nodes.
Then, there is a $t$-tuple $\gamma = (g_1,g_2,\dots,g_t)$ of nodes from $\alpha$
and $t$-tuple $\delta = (h_1,h_2,\dots,h_t)$ of nodes from $\beta$,
where $g_p < g_q$, for $1 \leq p < q \leq t$, $t < \min(r,s)$,
and $g_i$ and $h_i$ are context-dependent, for $1 \leq i \leq t$.
Let $\rho = p_1p_2\dots p_t$ be a string of non-context-free rules
corresponding to context dependencies between $\gamma$ and $\delta$.
We call $\rho$ the \emph{right context of $\alpha$} and the \emph{left context of $\beta$} or
the \emph{context of $\alpha$ and $\beta$}.
Consider a node $m_i \in \alpha$, where $1 \leq i \leq r$,
and two $(t-k+1)$-tuples of nodes $\sigma = (g_{k}, g_{k+1},\dots, g_t)$ and $\phi = (h_{k}, h_{k+1},\dots, h_t)$,
where $k$ is a minimal integer such that $m_i < g_{k}$.
Then, a string of non-context-free rules $\tau = p_kp_{k+1}\dots p_t$
corresponding to context dependencies between $\sigma$ and $\phi$
is called the \emph{right descendant context of $m_i$}, for some $1 \leq k \leq t$.
Analogously, we define the notion of the \emph{left descendant context} of a node $n_j$ in $\beta$,
for some $1 \leq j \leq s$.
\end{ndefinition}

\begin{ndefinition}
A labeled ordered tree $t$ is \emph{slow-branching} if any of its pairs of nonterminal neighboring paths
contains no more than two nonterminal nodes having two nonterminal children
and there is no reachable terminal node from nodes of the path between the root and any branching nonterminal node. 
A slow-branching labeled ordered tree is of \emph{degree $k$}
if it contains $k$ branching nonterminal nodes, $k \geq 1$.
\end{ndefinition}

\begin{example}
\label{ex:GG}
Let $G = (V$, $T$, $P$, $S)$ be a GG,
where $V = N \cup T$ such that $N = \{S$, $X$, $Y$, $Z$, $A_1$, $A_2$, $B$, $C_1$, $C_2$,
                $D_1$, $D_2$, $E$, $F_1$, $F_2\}$,
$T = \{a$, $b$, $c$, $0$, $1\}$, and $P$ contains the following rules:

\begin{center}
\begin{tabular}[H]{c c}
\parbox{4cm}{
    \begin{enumerate}[(1)]
        \item\label{rule:1} $S \ra A_1X$ 
        \item\label{rule:2} $X \ra A_2Y$
        \item\label{rule:3} $Y \ra BZ$
        \item\label{rule:4} $Z \ra C_1C_2$
        \item\label{rule:5} $A_1 \ra aA_1$
        \item\label{rule:6} $A_2 \ra A_2a$
        \item\label{rule:7} $B \ra bBc$
        \item\label{rule:8} $C_1 \ra aC_1$
        \item\label{rule:9} $C_2 \ra C_2b$
        \item\label{rule:10} $A_1A_2 \ra D_1D_2$
        \item\label{rule:11} $B \ra E$
    \end{enumerate}
}
&
\parbox{4cm}{
    \begin{enumerate}[(1)]
    \setcounter{enumi}{11}
        \item\label{rule:12} $C_1C_2 \ra F_1F_2$
        \item\label{rule:13} $D_1 \ra 0D_1$
        \item\label{rule:14} $D_2 \ra D_21$
        \item\label{rule:15} $E \ra 0E1$
        \item\label{rule:16} $F_1 \ra 0F_1$
        \item\label{rule:17} $F_2 \ra F_21$
        \item\label{rule:18} $D_1 \ra \varepsilon$
        \item\label{rule:19} $D_2 \ra \varepsilon$
        \item\label{rule:20} $E \ra \varepsilon$
        \item\label{rule:21} $F_1 \ra \varepsilon$
        \item\label{rule:22} $F_2 \ra \varepsilon$ 
    \end{enumerate}
}
\end{tabular}
\end{center}

A graph representing $\prescript{}{G}\triangle(S \Ra^* aaa0011a0011b)$ is illustrated in Fig. \ref{graph:2} and illustrate slow-branch\-ing\-ness. The graph is slow-branching since it has exactly $k$ branching nodes. Those are $S,X,Y,Z$. That any of its pairs of nonterminal neighboring paths
contains no more than two nonterminal nodes having two nonterminal children
and there is no reachable terminal node from nodes of the path between the root and any branching nonterminal node.
Observe that terminal nodes, denoted by square, do not influence any condition.

\begin{figure}[ht]
\hspace*{0.5cm}
\tikzset{
  treenode/.style = {align=left, inner sep=0pt, text centered, font=\sffamily},
  node/.style = {treenode, text width=1.5 em, draw=black, thick}
}
\begin{tikzpicture}[->,>=stealth',level/.style={sibling distance = 2.5cm/#1, level distance = 1.2cm}]
\tikzstyle{level 1}=[sibling distance=2cm]
\tikzstyle{level 2}=[sibling distance=1cm]
\tikzstyle{level 3}=[sibling distance=1cm]
\tikzstyle{level 4}=[sibling distance=1cm]
\tikzstyle{level 5}=[sibling distance=1cm]
\tikzstyle{level 6}=[sibling distance=1cm]
\tikzstyle{level 7}=[sibling distance=1cm]
\tikzstyle{level 8}=[sibling distance=1cm]
\tikzstyle{level 9}=[sibling distance=1cm]
\tikzstyle{level 10}=[sibling distance=1cm]

\node [circle, node]  {$S$}
        child{ node   [circle, node] {$A_1$} 
            child[left]{ node [minimum height=1.3em, node] {$a$} }
            child{ node[xshift=-0.5cm] (a1) [circle, node] {$A_1$}     
                child{ node [circle, node] {$D_1$}
                    child{ node [minimum height=1.3em, node] {$\varepsilon$}}
                }
            }
        }
        child{ node[circle, xshift=2cm]  [node] {$X$}
            child{ node  [circle, node] {$A_2$}
            child{ node[circle, xshift=0.5cm] [node] {$A_2$}    
                child{ node[circle, xshift=0.5cm] (a2) [node] {$A_2$}    
                    child{ node [circle, node] {$D_2$} 
                        child{ node [minimum height=1.3em, node] {$\varepsilon$}}
                    }                
                }
                child[right]{ node [minimum height=1.3em, node] {$a$} }
            }
            child[right]{ node [minimum height=1.3em, node] {$a$} }
        }
        child{ node[circle, xshift=3cm] [node] {$Y$} 
        child{ node [circle, node] {$B$}
        child{ node [circle, node] {$E$}         
            child[left]{ node [minimum height=1.3em, node] {$0$} }
            child{ node [circle, node] {$E$}  
                child[left]{ node [minimum height=1.3em, node] {$0$} }
                child{ node [circle, node] {$E$}  
                    child{ node [minimum height=1.3em, node] {$\varepsilon$}}
                }
                child[right]{ node [minimum height=1.3em, node] {$1$}  }  
            }
            child[right]{ node [minimum height=1.3em, node] {$1$}  }  
        }
        }
        child{ node[circle, xshift=3cm]  [node] {$Z$} 
            child{ node  [circle, node] {$C_1$} 
            child[left]{node [minimum height=1.3em, node] {$a$} }
            child{ node[circle, xshift=-0.5cm] (c1) [node] {$C_1$}     
                child{ node [circle, node] {$F_1$}                   
                    child[left]{ node [minimum height=1.3em, node] {$0$} }
                    child{ node[circle, xshift=-0.5cm] [node] {$F_1$}  
                        child[left]{ node [minimum height=1.3em, node] {$0$} }
                        child{ node[xshift=-0.5cm]  [circle, node] {$F_1$}  
                            child{ node [minimum height=1.3em, node] {$\varepsilon$}}
                        } 
                    } 
                }
            }
            }
        child{ node[circle, xshift=1cm]  [node] {$C_2$}    
            child{ node[circle, xshift=0.5cm] (c2) [node] {$C_2$}    
            child{ node [circle, node] {$F_2$}   
                child{ node[circle, xshift=0.5cm] [node] {$F_2$} 
                    child{ node[circle, xshift=0.5cm] [node] {$F_2$} 
                        child{ node [minimum height=1.3em, node] {$\varepsilon$}}
                    }
                    child[right]{ node [minimum height=1.3em, node] {$1$}  } 
                }
                child[right]{ node [minimum height=1.3em, node] {$1$}  } 
            }
            }
            child[right]{ node [minimum height=1.3em, node] {$b$} }
        } 
        }
    }
    } 
;
\draw [dashed,thick,-] (a1) -- (a2) node[midway,above=9,left=-10] {10};
\draw [dashed,thick,-] (c1) -- (c2) node[midway,above=2] {12};

\end{tikzpicture}
\caption{$\prescript{}{G}\triangle_{aaa0011a0011b}$}
\label{graph:2}
\end{figure}

Let us note that dashed lines and numbers contour only denote the context dependencies,
and applied non-context-free rules, respectively, and are not part of the derivation tree. 
The pairs of context-dependent nodes are linked with dashed lines, all the other nodes are context-independent.

Since $aaa0011a0011b = \frontier(\prescript{}{G}\triangle_{aaa0011a0011b})$, all leaves are terminal nodes.
Every other node is a nonterminal node.

For a pair of neighboring paths $\alpha = SA_1A_1D_1\varepsilon$ and $\beta = SXA_2A_2A_2D_2\varepsilon$,
a string $\rho = 10$ is their context,
it is the left context of $\beta$ and the right context of $\alpha$.

\end{example}

\section{Results}

\begin{theorem}
\label{th:L:kL}
A language $L$ is $k$-linear iff there is a constant $k \geq 0$, constant $u \geq 0$ and a linear core general grammar $G$
such that $L = L(G)$
and for every $x \in L(G)$, there is a slow-branching tree of degree~$k$ denoted by
$\prescript{}{G}\triangle_x \in \prescript{}{G}\blacktriangle$ that both following satisfies:
\begin{enumerate}
    \item\label{th:L:kL:1} any two nonterminal neighboring paths contain
                           no more than $u$ pairs of context-dependent nodes;
    \item\label{th:L:kL:2} all pairs of nodes occurring in non-neighboring paths are context-independent.
\end{enumerate}
\end{theorem}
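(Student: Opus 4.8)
The plan is to prove the equivalence in two directions, with the substantive work lying in the forward direction.

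\medskip
\noindent\emph{The easy direction ($k$-linear $\Rightarrow$ the condition).}
First I would assume $L$ is $k$-linear and take a $k$-linear grammar $G'$ generating it. Since every $k$-linear grammar is, by Definition~\ref{def:k-lin}, a special case of a linear core GG (the rules $A \ra x$, $A \ra xBy$, and $S \ra W$ with $W$ a length-$k$ string of nonterminals are all admissible linear core forms, and rules $A \ra xBy$ can be split into the allowed forms using auxiliary nonterminals), I can regard $G'$ itself, after a routine normalization, as a linear core GG with no rules of the form $AB \ra CD$. For any $x \in L(G')$ the standard derivation tree of a $k$-linear grammar has exactly $k$ nonterminal subtrees hanging below the single application of $S \ra W$, each subtree being a linear (caterpillar-shaped) tree. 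This yields a derivation tree that is slow-branching of degree $k$: the only branching nonterminal nodes are $S$ and the $k-1$ internal nodes created along the spine of $W$, and below them each path branches only into terminal material. Because $G'$ has no non-context-free rules, there are \emph{no} context-dependent pairs at all, so condition~\ref{th:L:kL:1} holds trivially with $u=0$ and condition~\ref{th:L:kL:2} holds vacuously.

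\medskip
\noindent\emph{The hard direction (the condition $\Rightarrow$ $k$-linear).}
Now I assume the linear core GG $G$ together with the constants $k$ and $u$ and the hypothesis that every $x \in L(G)$ has a slow-branching derivation tree $\prescript{}{G}\triangle_x$ of degree $k$ satisfying \ref{th:L:kL:1} and \ref{th:L:kL:2}. The goal is to construct a $k$-linear grammar $G''$ with $L(G'') = L(G)$. The key idea is to read off, from each such tree, $k$ nearly-linear ``strands,'' one per branching component, and to simulate the context-dependencies — the applications of rules $AB \ra CD$ — by a finite-state bookkeeping device folded into the nonterminals. The crucial structural observation, furnished by the slow-branching restriction together with condition~\ref{th:L:kL:1}, is that along any pair of neighboring paths only a bounded number ($\le u$) of non-context-free rule applications occur, and by condition~\ref{th:L:kL:2} context-dependencies never cross between non-neighboring paths. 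This boundedness is what lets me encode the ``pending context obligations'' — which symbol is waiting to be rewritten jointly with its neighbor — into a finite amount of information carried in the nonterminal symbols of $G''$.

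\medskip
\noindent\emph{Carrying out the construction.}
Concretely, I would define the nonterminals of $G''$ to be tuples consisting of a current symbol of $G$ together with a bounded record (of size governed by $u$) of the left and right descendant contexts, as defined preceding this theorem. A context-free rule $A \ra BC$ or $A \ra xEy$ of $G$ is simulated directly, distributing the recorded context to the two children in the branching case. A non-context-free rule $AB \ra CD$, whose application in the tree links a node on one neighboring path with a node on the adjacent one, is simulated by matching the right descendant context recorded on the left strand against the left descendant context recorded on the right strand: since these contexts are strings of bounded length over $P$, the match is a finite condition checkable by the grammar's finite control. The start rule $S \ra W$ of $G''$ launches the $k$ strands corresponding to the $k$ branching nonterminal components, with their shared-boundary contexts initialized consistently. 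I would then prove $L(G'') = L(G)$ by two inductions: one showing that every terminal string with a conforming slow-branching tree in $G$ is generated by $G''$ (reading the tree top-down and emitting the corresponding strand rules), and one showing the converse (every derivation of $G''$ reassembles into a valid conforming derivation tree of $G$, the bounded context record guaranteeing that every simulated $AB\ra CD$ corresponds to a genuine context-dependent pair). I expect the main obstacle to be the faithful simulation of the non-context-free rules: I must argue that the finite context record suffices — that condition~\ref{th:L:kL:1} really does cap the number of simultaneous obligations at $u$ per neighboring-path pair, that condition~\ref{th:L:kL:2} prevents obligations from leaking across strands, and that the left-to-right ordering of context-dependent pairs along the paths lets the matching be resolved deterministically rather than requiring unbounded lookahead. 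Establishing this bookkeeping invariant rigorously, and checking that the degree-$k$ slow-branching shape forces exactly $k$ strands and hence the length-$k$ right-hand side $W$ required by Definition~\ref{def:k-lin}, is where the real content of the proof lies.
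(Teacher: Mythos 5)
Your proposal takes essentially the same route as the paper's proof: you annotate nonterminals with bounded records of left/right descendant contexts (strings of non-context-free rules of length at most $u$), initialize the shared context of adjacent strands consistently (the paper guesses it nondeterministically at each application of $A \ra BC$), simulate each $AB \ra CD$ by two context-free rewrites that consume the pending obligation in left-to-right order (the paper's rules $A_{x|py} \ra C_{x|y}$ and $B_{py|z} \ra D_{y|z}$), and establish $L(G'')=L(G)$ by the same pair of inductions, concluding $k$-linearity from the slow-branching degree. The only notable difference is that you also spell out the easy direction ($k$-linear $\Rightarrow$ the tree conditions with $u=0$), which the paper leaves implicit.
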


\begin{proof}\emph{Construction.}
Consider any $u \geq 0$. Let $G = (V,T,P,S)$ be a GG such that $L(G) = L$.
Set $N = V - T$. Let $P_{cs} \subseteq P$ denote the set of all non-context-free rules of $G$.
Set
\[
    N' = \{ A_{l|r} \where A \in N,\ l,r \in (P_{cs} \cup \{\eps\})^u\}.
\]
Construct a grammar $G' = (V',T,P',S_{\eps|\eps})$, where $V' = N' \cup T$.
Set $P' = \emptyset$. Construct $P'$ by performing (\ref{construction:1}) through (\ref{construction:3}) given next.
\begin{enumerate}[(I)]
    \item\label{construction:1} For all $A \ra xEy \in P$, $A \in N$, $E \in N \cup \{\eps\}$, $x,y \in T^*$, and $l,r \in (P_{cs} \cup \{\eps\})^u$,
                                if $E \in \{\eps\}$ then add $A_{\eps|\eps} \ra xy$ to~$P'$ else add $A_{l|r} \ra xE_{l|r}y$ to~$P'$;
    \item\label{construction:2} for all $A \ra BC \in P$, where $A,B,C \in N$,
                                and $r,l,x \in (P_{cs} \cup \{\eps\})^u$,
                                add $A_{l|r} \ra B_{l|x}C_{x|r}$ to $P'$;
    \item\label{construction:3} for all $p\colon AB \ra CD \in P$, $A,B,C,D \in N$,
                                $x,z \in (P_{cs} \cup \{\eps\})^u$,
                                and $y \in (P_{cs} \cup \{\eps\})^{u-1}$,
                                add $A_{x|py} \ra C_{x|y}$ and $B_{py|z} \ra D_{y|z}$ to $P'$.
\end{enumerate}

\noindent
\emph{Basic idea.} Notice nonterminal symbols.
Since every pair of neighboring paths of $G$ contains a limited number of context-dependent nodes,
all of its context-dependencies are encoded in nonterminals.
$G'$ nondeterministically decides about all context-dependencies
while introducing a new pair of neighboring paths by rules from~(\ref{construction:2}).
A new pair of neighboring paths is introduced with every application~of
\[
    A_{l|r} \ra B_{l|x}C_{x|r},
\]
where $x$ encodes a new descendant context.
Context dependencies are realized later by context-free rules from~(\ref{construction:3}).

\vspace{1em}
\noindent

Since $P'$ contains no non-context-free rule and $G'$ is context-free.
Next, we prove $L(G) = L(G')$ by establishing Claims \ref{GG:sub:kL} through \ref{GG:sup:kL}.
Define the new homomorphism $\gamma: V' \ra V$, $\gamma(A_{l|r}) = A$,
for $A_{l|r} \in N'$, and $\gamma(a) = a$ otherwise.

\begin{nclaim}
\label{GG:sub:kL}
If $S \Ra^m w$ in $G$, where $m \geq 0$ and $w \in V^*$, then $S_{\eps|\eps} \Ra^* w'$ in $G'$,
where $w' \in V'^*$ and $\gamma(w') = w$.
\end{nclaim}

\begin{proof}
We prove this by induction on $m \geq 0$.

\ibasis
Let $m = 0$. That is $S \Ra^0 S$ in $G$.
Clearly, $S_{\eps|\eps} \Ra^0 S_{\eps|\eps}$ in $G'$, where $\gamma(S_{\eps|\eps}) = S$, so the basis holds.

\ihypothesis
Suppose that there exists $n \geq 0$ such that Claim \ref{GG:sub:kL} holds for~all $0 \leq m \leq n$.

\istep
Let $S \Ra^{n+1} w$ in $G$. Then, $S \Ra^n v \Ra w$,
where $v \in V^*$, and there exists $p \in P$ such that $v \Ra w\,[p]$.
By the induction hypothesis, $S_{\eps|\eps} \Ra^* v'$, where $\gamma(v') = v$, in $G'$.
Next, we consider the following three forms of $p$.

\begin{enumerate}[(I)]
    \item Let $p\colon A \ra xEy \in P$, for some $A \in N$ , $E \in N \cup \{\eps\}$, $x,y \in T^* $.
         
          If there is no nonterminal on the right-hand side of the rule, it implies that left descendant context and a right descendant context of $A$ is $\eps$, then, by the construction of $G'$, there exists a rule $p'\colon A_{\eps|\eps} \ra xy \in P'$, where $A_{\eps|\eps}  \in v'$.
          Otherwise, suppose $l$ and $r$ are
          a left descendant context and a right descendant context of $A$.
          By the construction of $G'$, there exists a rule $p'\colon A_{l|r} \ra xE_{l|r}y \in P'$, where $A_{l|r}  \in v'$.
          Then, there exists a derivation $v' \Ra w'\,[p']$ in $G'$, where $\gamma(w') =~w$.

    \item Let $p\colon A \ra BC \in P$, for some $A,B,C \in N$.
          Without any loss of generality, suppose $l$ and $r$ are
          a left descendant context and a right descendant context of $A$,
          and $x \in (P_{cs} \cup \{\eps\})^u$ is a context of neighboring paths beginning at this node.
          By the construction of $G'$, there exists a rule $p'\colon A_{l|r} \ra B_{l|x}C_{x|r} \in P'$, where $A_{l|r}, B_{l|x}, C_{x|r} \in v'$.
          Then, there exists a derivation $v' \Ra w'\,[p']$ in $G'$, where $\gamma(w') = w$.

    \item\label{claim:1:phase:case:4} Let $p\colon AB \ra CD \in P$, for some $A,B,C,D \in N$.
          By the assumption stated in Theorem \ref{th:L:kL},
          $A$ and $B$ occur in two neighboring paths denoted by $\alpha$ and $\beta$, respectively.
          Without any loss of generality, suppose that a context of $\alpha$ and $\beta$
          is a string $c \in (P_{cs} \cup \eps)^u$, where $c = pcd$,
          and $l$ is a left descendant context, $r$ is a right descendant context of $A$, $B$, respectively.
          By the construction of $G'$, there exist two rules
          \[
              p'_l\colon A_{l|pcd} \ra C_{l|cd},\ \ p'_r\colon B_{pcd|r} \ra D_{cd|r} \in P',
          \]
          where $A_{l|pcd}, C_{l|cd}, B_{pcd|r}, D_{cd|r} \in V'$.
          Then, there exists a derivation $v' \Ra^2 w'\,[p'_lp'_r]$ in $G'$, where $\gamma(w') = w$.
\end{enumerate}

\noindent
Notice (\ref{claim:1:phase:case:4}). The preservation of the context is achieved by nonterminal symbols.
Since the stored context is reduced symbol by symbol from left to right direction
in both $\alpha$ and $\beta$, $G'$ simulates the applications of non-context-free rules of~$G$.

We covered all possible forms of $p$, so the claim holds.
\end{proof}
\begin{nclaim}
\label{pre:GG:sup:kL}
Every $x \in L(G')$ can be derived in $G'$ as follows.
\[
    S_{\eps|\eps} = x_0 \Ra^{d_1} x_1 \Ra^{d_2} x_2 \Ra^{d_3} \cdots \Ra^{d_{h-1}} x_{h-1} \Ra^{d_h} x_{h} = x,
\]
for some $h \geq 0$, where $d_i \in \{1,2\}$, $1 \leq i \leq h$, so that
\begin{enumerate}
    \item if $d_i = 1$, then $x_{i-1} = uA_{l|r}v$, $x_i = uzv$, $x_{i-1} \Ra x_i\ [A_{l|r} \ra z]$,
          where $u,v \in V'^*$,\\ $z \in \{E_{l|r},B_{l|r},C_{l|x}D_{x|r},x,y\}$,
          for some $A_{l|r},B_{l|r},C_{l|x},D_{x|r} \in  N'$, $E_{l|r} \in (N' \cup \{\varepsilon\})$,
                $x,y \in T^*$;\vspace{1em}
    \item if $d_i = 2$, then $x_{i-1} = uA_{x|py}B_{py|z}v$, $x_i = uC_{x|y}D_{y|z}v$, and
        \[
          uA_{x|py}B_{py|z}v \Ra uC_{x|y}B_{py|z}v\ [A_{x|py} \ra C_{x|y}]
          \Ra uC_{x|y}D_{y|z}v\ [B_{py|z} \ra D_{y|z}],
        \]
          for some $u,v \in V'^*$ and $A_{x|py},B_{py|z},C_{x|y},D_{y|z} \in N'$.
\end{enumerate}
\end{nclaim}

\begin{proof}
Since $G'$ is context-free, without any loss of generality in every derivation of $G'$
we can always reorder applied rules to satisfy Claim \ref{pre:GG:sup:kL}.
\end{proof}

\begin{nclaim}
\label{GG:sup:kL}
Let $S_{\eps|\eps} \Ra^{d_1} x_1 \Ra^{d_2} \cdots \Ra^{d_{m-1}} x_{m-1} \Ra^{d_m} x_{m}$
in $G'$ be a derivation that satisfies Claim \ref{pre:GG:sup:kL}, for some $m \geq 0$.
Then, $S \Ra^* w$ in $G$, where $\gamma(x_m) = w$.
\end{nclaim}

\begin{proof}
We prove this by induction on $m \geq 0$.

\ibasis
Let $m = 0$. That is $S_{\eps|\eps} \Ra^0 S_{\eps|\eps}$ in $G'$.
Clearly, $S \Ra^0 S$ in $G$. Since $\gamma(S_{\eps|\eps}) = S$, the basis holds.

\ihypothesis
Suppose that there exists $n \geq 0$ such that Claim \ref{GG:sup:kL} holds for~all $0 \leq m \leq n$.

\istep
Let $S_{\eps|\eps} \Ra^{d_1} x_1 \Ra^{d_2} \cdots \Ra^{d_{n-1}} x_{n-1} \Ra^{d_n} x_{n} \Ra^{d_{n+1}} x_{n+1}$ in $G'$
be a derivation that satisfies Claim \ref{pre:GG:sup:kL}.
By the induction hypothesis, $S \Ra^* v$, $v \in V^*$, where $\gamma(x_{n}) = v$, in $G$.
Divide the proof into two parts according to~$d_{n+1}$.

\begin{enumerate}[(A)]
    \item Let $d_{n+1} = 1$. By the construction of $G'$,
          there exists a rule $p' \in P'$ such that $x_{n} \Ra^{d_{n+1}} x_{n+1}\,[p']$.
          Next, we consider the following two forms of~$p'$.
          \vspace{1em}
          \begin{enumerate}[(I)]
              \item Let $p'\colon A_{l|r} \ra xE_{l|r}y$ or $p'\colon  A_{\eps|\eps} \ra xy  \in P'$,
                    for some $A \in N$, $E \in N$, $x,y \in T^*$ and $l,r \in (P_{cs} \cup \{\eps\})^u$.
                    By the construction of $G'$, rule $p'$ was introduced by some rule $p\colon A \ra xEy \in P$ or $p\colon A \ra xy \in P$, respectively.
                    Then, there exists a derivation $v \Ra w\,[p]$, where $\gamma(x_{n+1}) = w$.

              \item Let $p'\colon A_{l|r} \ra B_{l|x}C_{x|r} \in P'$,
                    for some $A,B,C \in N$ and $l,r,x \in (P_{cs} \cup \{\eps\})^u$.
                    By the construction of $G'$, rule $p'$ was introduced by some rule $p\colon A \ra BC \in P$.
                    Then, there exists a derivation $v \Ra w\,[p]$, where $\gamma(x_{n+1}) = w$.
          \end{enumerate}
          \vspace{1em}
    \item Let $d_{n+1} = 2$. Then, $x_{n} \Ra^{d_{n+1}} x_{n+1}$ is equivalent to
          \[
              u_1A_{x|py}B_{py|z}u_2 \Ra u_1C_{x|y}B_{py|z}u_2\ [p'_1] \Ra u_1C_{x|y}D_{y|z}u_2\ [p'_2],
          \]
          where $x_{n} = u_1A_{x|py}B_{py|z}u_2$, $x_{n+1} = u_1C_{x|y}D_{y|z}u_2$, and
          \[
              p'_1\colon A_{x|py} \ra C_{x|y},\ p'_2\colon B_{py|z} \ra D_{y|z} \in P',
          \]
          for some $u_1,u_2 \in V'^*$ and $A_{x|py}$, $B_{py|z}$, $C_{x|y}$, $D_{y|z} \in N'$.
          By the construction of $G'$, rules $p'_1$ and $p'_2$ were introduced by some rule $p\colon AB \ra CD \in P$,
          Then, there exists a derivation $v \Ra w\,[p]$, where $\gamma(x_{n+1}) = w$.
\end{enumerate}

We covered all possibilities, so the claim holds.
\end{proof}

Observe that respective the derivation trees of the constructed context-free $G'$ remain slow-branching.

\begin{nclaim}
\label{GG:kLin}
The grammar $G'$ is $k$-linear.
\end{nclaim}

\begin{proof}
In construction (\ref{construction:3}) we replace the rules of the form $AB \ra CD$ with the rules of the form $A \ra B$, where $A,B,C,D \in N$.
Therefore, only the rules that are allowed to occur in the derivation $G'$ before the rules of the form $A \ra BC$ are the rules of the form $A \ra B$.
Rules of the form $A \ra B$  before the rules of the form $A \ra BC$ can be omitted by the trivial transformation of $G'$, similar to the algorithm on elimination of unit productions from Section 5 in \cite{Med20}.
Therefore, the grammar $G'$ is $k$-linear.
%TODO
\end{proof}

By Claim \ref{GG:kLin} $G'$ is $k$-linear.
By Claims \ref{GG:sub:kL} and \ref{GG:sup:kL},
$S \Ra^* w$ in $G$ iff $S_{\eps|\eps} \Ra^* w'$ in $G'$, where $\gamma(w') = w$.
If $S \Ra^* w$ in $G$ and $w \in T^*$, then $w \in L(G)$.
Since $\gamma(w') = w' = w$, for $w \in T^*$, $w' \in L(G')$.
Therefore, $L(G) = L(G')$ and Theorem \ref{th:L:kL} hold.
\end{proof}

Consider Theorem \ref{th:L:kL}. Observe that the \ref{th:L:kL:2}nd condition is superfluous whenever $G$ is propagating.

\begin{theorem}
\label{th:L:kL:P}
A language $L$ is $k$-linear iff there is a constant $k \geq 0$, constant $u \geq 0$ and a propagating linear core general grammar $G$
such that $L = L(G)$ and for every $x \in L(G)$,
there is a slow-branching tree of degree $k$ $\triangle_x \in \prescript{}{G}\blacktriangle$,
where any two nonterminal neighboring paths contain no more than $u$ pairs of context-dependent nodes.
\end{theorem}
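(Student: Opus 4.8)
The plan is to obtain Theorem~\ref{th:L:kL:P} as a corollary of Theorem~\ref{th:L:kL}: I would show that, for a \emph{propagating} $G$, condition~\ref{th:L:kL:2} of Theorem~\ref{th:L:kL} is automatically satisfied, so that the hypotheses of the two theorems coincide. The single new ingredient is thus the remark made just before the statement, which I would isolate as a lemma: \emph{if $G$ is a propagating linear core GG, then in every $\prescript{}{G}\triangle_x$ each context-dependent pair of nodes lies on two neighboring paths; equivalently, every pair of nodes occurring in non-neighboring paths is context-independent.}

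For the implication ($\Leftarrow$) I would first prove this lemma. A context-dependent pair $(k,m)$, labeled $A$ and $B$, is produced only by item~\ref{DT:2:3} of the derivation-tree definition, i.e.\ when $A$ and $B$ are the $i$th and $(i+1)$th --- hence \emph{adjacent} --- non-$\eps$-leaves of the current partial tree at the moment $AB \ra CD$ is applied. In a non-propagating grammar two nonterminals can be made adjacent by first erasing an intervening symbol, which would leave an empty-frontier subtree wedged between the subtrees later rooted at $C$ and $D$ and would thereby place a third path between the root-to-$k$ and root-to-$m$ paths. Propagation forbids all $\eps$-rules, so no such erasure is possible: $A$ and $B$ are genuinely adjacent, the subtrees rooted at $C$ and $D$ stay frontier-adjacent as the derivation proceeds, and the paths through $k$ and $m$ remain neighboring in the sense of Definition~\ref{Tree}. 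This establishes condition~\ref{th:L:kL:2}; since condition~\ref{th:L:kL:1} is assumed, Theorem~\ref{th:L:kL} applies and gives that $L(G)$ is $k$-linear.

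For the implication ($\Rightarrow$) I would start from an arbitrary $k$-linear grammar generating $L$ and manufacture a propagating linear core GG for it. Under the usual convention concerning the empty word, standard $\eps$-rule elimination yields a propagating $k$-linear grammar; its rules $A \ra xBy$ and $A \ra x$ already have the linear-core shape $A \ra xEy$, while the start rule $S \ra W$ with $|W| = k$ is replaced by a chain of binary rules $A \ra BC$. The result contains no rule of the form $AB \ra CD$, so none of its derivation trees has a context-dependent pair; both conditions of Theorem~\ref{th:L:kL:P} then hold vacuously with $u = 0$, and the branching introduced by the decomposition of $S \ra W$ renders each tree slow-branching of the required degree~$k$.

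The step I expect to be the main obstacle is the lemma, and specifically the passage from the intuitive ``frontier-adjacent subtrees'' picture to the rigid combinatorial Definition~\ref{Tree} of \emph{nonterminal neighboring paths} with its four sibling conditions: one must verify that, under propagation, frontier-adjacency of the $C$- and $D$-subtrees actually forces the two root paths to meet conditions~1--4 (the prescribed pattern of terminal and nonterminal siblings), rather than merely to have ``no path between them'' in the loose sense. Once the lemma is pinned down at this level of precision, the remainder reduces to invoking Theorem~\ref{th:L:kL} and to the textbook elimination of $\eps$-rules.
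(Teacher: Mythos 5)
Your proposal is correct and amounts to the same approach the paper takes: the paper's entire proof is ``by analogy with the proof of Theorem~\ref{th:L:kL}'', immediately preceded by the remark that condition~\ref{th:L:kL:2} of that theorem is superfluous whenever $G$ is propagating --- and that remark is exactly the lemma you isolate and prove in order to invoke Theorem~\ref{th:L:kL} as a black box. If anything, your version is more complete than the paper's one-line proof, since you actually justify the remark (propagation rules out $\eps$-subtrees wedged between the two paths, so every context-dependent pair does lie on nonterminal neighboring paths in the sense of Definition~\ref{Tree}) and you spell out the easy ($\Rightarrow$) direction via $\eps$-elimination and binary decomposition of $S \ra W$, both of which the paper leaves implicit.
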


\begin{proof}
Prove this by analogy with the proof of Theorem \ref{th:L:kL}.
\end{proof}

\begin{theorem}
\label{th:L:R}
A language $L$ is regular iff there is a constant $u \geq 0$ and a left linear core general grammar $G$ such that $L = L(G)$
and for every $x \in L(G)$, there is a tree 
$\triangle_x \in \prescript{}{G}\blacktriangle$ that satisfies:
\begin{enumerate}
    \item\label{th:L:R:1} any two nonterminal neighboring paths contain
                           no more than $u$ pairs of context-dependent nodes;
    \item\label{th:L:R:2} out of neighboring paths, any pair of nodes is context-independent.
\end{enumerate}
\end{theorem}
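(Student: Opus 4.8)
The plan is to follow the proof of Theorem~\ref{th:L:kL} as a template, adapting the context-encoding construction to the left linear core and replacing the metalinear target by~$\REG$.

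The \emph{only if} direction is immediate. If $L \in \REG$, then $L$ is generated by a regular grammar whose rules all have the form $A \ra xB$ or $A \ra x$ with $x \in T^*$ and $B \in V-T$; such a grammar is a left linear core GG that uses only rules of the form $A \ra xE$ (taking $E = B$ or $E = \eps$) and neither of the forms $AB \ra CD$ or $A \ra BC$. Every derivation tree in $\prescript{}{G}\blacktriangle$ is then a caterpillar containing no context-dependent pair of nodes whatsoever, so conditions~\ref{th:L:R:1} and~\ref{th:L:R:2} are satisfied vacuously with $u = 0$.

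For the \emph{if} direction, I would start from the left linear core GG $G = (V,T,P,S)$ and its constant~$u$, and build $G' = (V',T,P',S_{\eps|\eps})$ by the same device used in Theorem~\ref{th:L:kL}: put $N' = \{A_{l|r} \where A \in N,\ l,r \in (P_{cs} \cup \{\eps\})^u\}$ and translate each $A \ra xE$ into $A_{l|r} \ra xE_{l|r}$ (or into $A_{\eps|\eps} \ra x$ when $E = \eps$), each $A \ra BC$ into $A_{l|r} \ra B_{l|x}C_{x|r}$, and each $AB \ra CD$ into the pair $A_{x|py} \ra C_{x|y}$ and $B_{py|z} \ra D_{y|z}$. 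The equivalence $L(G) = L(G')$ then follows from two inductions that are verbatim analogues of Claims~\ref{GG:sub:kL} through~\ref{GG:sup:kL}; the encoding of left and right descendant contexts into the subscripts and the simulation of each non-context-free rule by a pair of context-free unit rules are insensitive to whether the terminal rules emit terminals on one side or on both.

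The decisive step, and the one I expect to be the main obstacle, is to prove that $G'$ generates a \emph{regular} language and not merely a metalinear one, since the naive translation still leaves the branching rules $A_{l|r} \ra B_{l|x}C_{x|r}$ in $P'$. Here the left-linearity of the terminal rules is precisely the extra structure to exploit: because every terminal-producing rule has the shape $A_{l|r} \ra xE_{l|r}$ with all terminals to the left of a single successor nonterminal, and because conditions~\ref{th:L:R:1} and~\ref{th:L:R:2} confine every context dependency to a bounded number of pairs lying between neighboring paths, the information that can flow across a branch is bounded by $u$ and is never exchanged between non-neighboring subtrees. After eliminating the unit productions introduced for the rules of the form $AB \ra CD$ (as in Claim~\ref{GG:kLin}, following the elimination of unit productions in Section~5 of~\cite{Med20}), I would argue that the derivation can be carried forward by a single nonterminal at a time whose finite state records the current symbol together with its bounded descendant contexts $l$ and $r$, yielding a right-linear grammar equivalent to $G'$ and hence $L(G') \in \REG$. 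Making this collapse of the branching rules precise, that is, showing that the bounded, neighbor-confined context dependencies genuinely prevent the two children of a branch from contributing interdependent unbounded terminal output, is the technically delicate heart of the argument; the remaining bookkeeping parallels Theorem~\ref{th:L:kL} and Claim~\ref{GG:kLin}.
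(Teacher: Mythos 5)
Your construction and the two induction claims are exactly the analogy the paper has in mind (its own proof of Theorem~\ref{th:L:R} says nothing beyond ``prove by analogy with Theorem~\ref{th:L:kL}''), and your only-if direction is fine. But the step you yourself single out as the delicate heart---collapsing the branching rules $A_{l|r} \ra B_{l|x}C_{x|r}$ so that $G'$ becomes right-linear---is a genuine gap, and it cannot be closed as you sketch it. Bounded, neighbor-confined context dependencies do not prevent the two children of a branch from producing nested, interdependent terminal output, because such nesting needs no rule $AB \ra CD$ at all. Concretely, consider the left linear core GG with rules $S \ra AT$, $T \ra SB$, $S \ra \eps$, $A \ra a$, $B \ra b$. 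Every rule has one of the permitted forms ($A \ra BC$ or $A \ra xE$), there is no rule of the form $AB \ra CD$, so every tree in $\prescript{}{G}\blacktriangle$ satisfies conditions~\ref{th:L:R:1} and~\ref{th:L:R:2} vacuously with $u = 0$; yet $S \Ra AT \Ra aT \Ra aSB \Ra aSb \Ra^* a^n b^n$ shows $L(G) = \{a^n b^n \where n \geq 0\}$, which is not regular. Hence no argument that ``the derivation can be carried forward by a single nonterminal at a time'' can succeed: your $G'$ inherits the branching rules, and those alone already generate non-regular languages.

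The reason the analogy breaks precisely here is that Theorem~\ref{th:L:kL} carries the additional hypothesis that the trees $\prescript{}{G}\triangle_x$ are slow-branching of degree $k$; it is that hypothesis, not the bound $u$ on context-dependent pairs, which tames the rules $A \ra BC$, and Theorem~\ref{th:L:R} as stated imposes no restriction on branching whatsoever. So the gap is not a matter of missing bookkeeping: to complete any proof one must first strengthen the hypotheses, e.g.\ by restricting branching in the trees $\triangle_x$ (in the extreme, forbidding branching nonterminal nodes, in which case sentential forms contain a single nonterminal, rules $AB \ra CD$ never apply, and $P'$ consists only of right-linear rules $A_{l|r} \ra xE_{l|r}$ and unit rules $A_{x|py} \ra C_{x|y}$, giving regularity immediately), or by excluding $A \ra BC$ from the left linear core altogether. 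Your honest flagging of the problematic step is exactly right; the step is not merely delicate but, under the stated hypotheses, false.
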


\begin{proof}
Prove this by analogy with the proof of Theorem \ref{th:L:kL}.
\end{proof}

\begin{theorem}
\label{th:L:R:P}
A language $L$ is regular iff there is a constant $u \geq 0$ and a propagating left linear core general grammar $G$
such that $L = L(G)$ and for every $x \in L(G)$,
there is a tree $\triangle_x \in \prescript{}{G}\blacktriangle$,
where any two nonterminal neighboring paths contain no more than $u$ pairs of context-dependent nodes.
\end{theorem}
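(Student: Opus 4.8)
The plan is to treat Theorem~\ref{th:L:R:P} as the propagating specialization of Theorem~\ref{th:L:R}, so that I may reuse, essentially unchanged, the construction and the three claims that drive the proof of Theorem~\ref{th:L:kL}. Starting from a propagating left linear core GG $G = (V,T,P,S)$ with $L(G) = L$, all of whose sentences admit a derivation tree carrying at most $u$ context-dependent pairs on each pair of nonterminal neighboring paths, I would build $G' = (V',T,P',S_{\eps|\eps})$ with $N' = \{A_{l|r} \where A \in N,\ l,r \in (P_{cs}\cup\{\eps\})^u\}$ exactly as in constructions~(\ref{construction:1})--(\ref{construction:3}). The point that delivers regularity is that in the left linear core setting a terminal rule $A \ra xE$ contributes the \emph{right-linear} rule $A_{l|r} \ra xE_{l|r}$ (and $A_{\eps|\eps} \ra x$ when $E = \eps$), while every non-context-free rule $AB \ra CD$ still contributes only the unit rules $A_{x|py} \ra C_{x|y}$ and $B_{py|z} \ra D_{y|z}$. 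Hence, after removing unit rules as in Claim~\ref{GG:kLin} (the elimination from \cite{Med20}), $G'$ is a right-linear, and therefore regular, grammar.

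For the equality $L(G) = L(G')$ I would reprove the analogues of Claims~\ref{GG:sub:kL} through~\ref{GG:sup:kL} verbatim, using the same homomorphism $\gamma\colon V' \ra V$ with $\gamma(A_{l|r}) = A$. The two inductions on derivation length are unaffected by the switch from $A \ra xEy$ to $A \ra xE$: construction~(\ref{construction:2}) still guesses the new descendant context $x$ whenever a branch is introduced, and construction~(\ref{construction:3}) still consumes that context symbol by symbol from left to right, splitting the simulation of each $AB \ra CD$ across the two neighboring paths. Because $G$ is propagating, no $\eps$-leaf ever occurs, which actually streamlines the bookkeeping of non-$\eps$-leaves used in the derivation-tree definition and in the inductive step.

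The one genuinely new step, and the part I expect to be the main obstacle, is to justify dropping condition~(\ref{th:L:R:2}) of Theorem~\ref{th:L:R}, i.e.\ to prove that context-independence outside neighboring paths is \emph{automatic} once $G$ is propagating, exactly as foreshadowed by the remark following Theorem~\ref{th:L:kL}. The key observation is that in a propagating grammar every generated symbol survives into the frontier, so whenever two symbols are adjacent in a sentential form there is no intervening subtree whose yield vanished; consequently the two tree positions admit no path strictly between them and their paths are neighboring. Since every context-dependent pair arises from a single application of some $AB \ra CD$ to two \emph{adjacent} symbols, propagation forces every context-dependent pair onto a pair of neighboring paths, so condition~(\ref{th:L:R:2}) holds of every derivation tree of $G$ without being assumed. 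With this in place, condition~(\ref{th:L:R:1}) alone bounds the stored contexts $l,r$ to length $u$, the construction above applies, and $L = L(G')$ with $G'$ regular. The converse direction is immediate: any regular $L$ has a right-linear grammar, which is a propagating left linear core GG whose derivation trees use no non-context-free rule and hence contain no context-dependent pair at all, so both conditions are satisfied trivially.
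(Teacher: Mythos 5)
Your overall route is exactly the one the paper intends: its entire proof of Theorem~\ref{th:L:R:P} is the single sentence ``prove this by analogy with the proof of Theorem~\ref{th:L:kL}},'' and you carry that analogy out---reusing constructions~(\ref{construction:1})--(\ref{construction:3}), the homomorphism $\gamma$, and Claims~\ref{GG:sub:kL}--\ref{GG:sup:kL}, and supplying the argument (foreshadowed by the paper's remark after Theorem~\ref{th:L:kL}) that propagation makes context-independence outside neighboring paths automatic. That last argument and the converse direction are fine, modulo the small edge case that a propagating grammar cannot generate $\eps$, so the converse needs $\eps \notin L$ or a special dispensation.

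However, there is a genuine gap at the step ``after removing unit rules as in Claim~\ref{GG:kLin}, $G'$ is right-linear.'' A left linear core GG still permits branching rules $A \ra BC$, and construction~(\ref{construction:2}) translates these into $A_{l|r} \ra B_{l|x}C_{x|r}$, which are neither right-linear nor unit rules; unit elimination does not touch them. In Theorem~\ref{th:L:kL} these branching rules are tamed only by the \emph{slow-branching degree-$k$} hypothesis, which Claim~\ref{GG:kLin} uses to push all branching into the initial $S \ra W$ segment of a $k$-linear grammar; Theorems~\ref{th:L:R} and~\ref{th:L:R:P} impose no such restriction, so your conclusion that $G'$ is right-linear simply does not follow. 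Worse, the obstacle is not merely presentational: the propagating left linear core GG with rules $S \ra aT$, $T \ra SB$, $S \ra aB$, $B \ra b$ uses no rule of the form $AB \ra CD$, so every derivation tree satisfies the hypothesis of Theorem~\ref{th:L:R:P} with $u = 0$, yet it generates $\{a^nb^n \where n \geq 1\}$, which is not regular. Hence no completion of your argument (nor of the paper's ``analogy'') can succeed as stated; the theorem needs an additional hypothesis confining the $A \ra BC$ rules---for instance a slow-branching-type condition as in Theorem~\ref{th:L:kL}, or their outright exclusion---and your proof should either add and use such a hypothesis or explicitly flag its absence.
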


\begin{proof}
Prove this by analogy with the proof of Theorem \ref{th:L:kL}.
\end{proof}

\section{Use}

In this section, we explain how to apply the results achieved in the previous section in order to demonstrate the metalinearness (or regularity) of a language, $L$. As a rule, this demonstration follows the next three-step proof scheme for metalinearness.

\begin{enumerate}
    \item Construct a linear core GG $G$.
    \item Prove $L(G) = L$.
    \item Prove that $G$ satisfies conditions from Theorem \ref{th:L:kL:2} or Theorem \ref{th:L:kL} depending on whether $G$ is context-sensitive.
\end{enumerate}

For regularity, we use a similar three-step scheme as following.

\begin{enumerate}
    \item Construct a left linear core GG $G$.
    \item Prove $L(G) = L$.
    \item Prove that $G$ satisfies conditions from Theorem \ref{th:L:R} or Theorem \ref{th:L:R:P} depending on whether $G$ is context-sensitive.
\end{enumerate}

Reconsider the grammar $G$ from Example \ref{ex:GG}. Following the proof scheme sketched above,  we next prove that $L(G)$ is $k$-linear.
Without any loss of generality, every terminal derivation of $G$ can be divided into the following 5 phases,
where each rule may be used only in a specific phase:

\noindent
\begin{center}
\begin{inparaenum}[(a)]
    \item\label{phase:a} \ref{rule:1}--\ref{rule:4}
    \item\label{phase:b} \ref{rule:5}--\ref{rule:9}
    \item\label{phase:c} \ref{rule:10}--\ref{rule:12}
    \item\label{phase:d} \ref{rule:13}--\ref{rule:17}
    \item\label{phase:e} \ref{rule:18}--\ref{rule:22}
\end{inparaenum}
\end{center}

Next, we describe these phases in greater detail. 

\begin{enumerate}[(a)]
    \item First, we generate the following string by rules \ref{rule:1} though  \ref{rule:4}.
          \[
            A_1A_2BC_1C_2
          \]
          Possibly applicable rules from (\ref{phase:b}) and (\ref{phase:c}) may be postponed to the next phases without affecting the derivation, since the rules in the previous phases cannot rewrite the nonterminals of the following phases.
          \vspace{1em}
    \item The rules (\ref{rule:5}) through (\ref{rule:9}) are context-free rules and nonterminals on the left-hand side of the rule are the same as on the right-hand side of the rule. Therefore, they are grouped into (\ref{phase:b}), since they only generate terminals. Possibly applicable rules from (\ref{phase:c}) may be postponed for the phase (\ref{phase:c})
          without affecting the derivation since the rules in the previous phases cannot rewrite nonterminals from the following phases.
          \[
            a^*A_1A_2a^*b^*Bc^*a^*C_1C_2b^*.
          \]
    \item The rules \ref{rule:10} and \ref{rule:12} are non context-free rules. The rules \ref{rule:10} through \ref{rule:12} are all rules without generating terminals. For the same reason as in (\ref{phase:a}) rules \ref{rule:1} to \ref{rule:4} from the phases (\ref{phase:d}) and (\ref{phase:e}) can be postponed to respective phases.
          \[
            a^*D_1D_2a^*b^*Ec^*a^*F_1F_2b^*.
          \]
    \item The rules \ref{rule:13} through \ref{rule:17} are alike rules in (\ref{phase:b})
          \[
            a^*0^*D_1D_21^*a^*b^*0^*E1^*c^*a^*0^*F_1F_21^*b^*.
          \]
    \item Since rules \ref{rule:18} and \ref{rule:22} are erasing rules and
          they can always be postponed until the end of any successful derivation.
          \[
            a^*0^*1^*a^*b^*0^*1^*c^*a^*0^*1^*b^*.
          \]
\end{enumerate}

Grammar $G$ is obviously a linear core GG.

Only rules in the step (\ref{phase:a}) include branching of nonterminals, no terminals are generated and the branching in the step (\ref{phase:a}) is a slow-branching since the degree derivation tree is $4$ and, therefore, $u$ is always $4$.
Therefore, the slow-branching condition is fulfilled.

Let us now show that for any $x \in L(G)$, there is $\prescript{}{G}\triangle_x \in \prescript{}{G}\blacktriangle$,
where any two neighboring paths contain no more than a one pair of context-dependent nodes.

Every pair of context-dependent nodes in $\prescript{}{G}\triangle_x$
corresponds to one non-context-free rule in $S \Ra^* x$.
Consider the five phases sketched above. Observe that all phases except (\ref{phase:c})
contain only non context-free rules, so we only have to investigate (\ref{phase:c}).
On the other hand, (\ref{phase:c}) contain no rule of the form $A \ra BC$,
thus the number of neighboring paths remains unchanged.

In (\ref{phase:c})  rule \ref{rule:10} and \ref{rule:12} introduce context dependency between two pairs of neighboring paths. After the application of these two rules, we cannot reach the nonterminals again on the left-hand side of rules \ref{rule:10} and \ref{rule:12}. Therefore, these context-dependencies can occur only once between a pair of neighboring paths.

No other non-context-free rule is applied; therefore, no other context-dependent pair of nodes can occur.
Then, every pair of neighboring paths may contain at most one context-dependent pair of nodes introduced in phase
(\ref{phase:c}).

Since $G$ is a linear core GG, where for every $x \in L(G)$,
there is $\prescript{}{G}\triangle_x \in \prescript{}{G}\blacktriangle$,
where any two neighboring paths contain no more than one pair of context-dependent nodes,
by Theorem \ref{th:L:kL}, $L(G)$ is $k$-linear.

Unfortunately, although we are able to transform any GG into KNF and, that is, linear core GG, the question whether the conditions in Theorems \ref{th:L:kL} through \ref{th:L:R:P} are satisfied is obviously undecidable. To invent an algorithm that gives at least approximate results is part of a future research.

\section{Final Remarks and Open Problems}

Before closing this paper, we bring the reader's attention to an open question. More specifically, consider a more lenient definition of slow-branching tree as follows.

\begin{ndefinition}
\label{def:alt}
A labeled ordered tree $t$ is \emph{slow-branching} if any of its pairs of nonterminal neighboring paths
contains no more than two nonterminal nodes having two nonterminal children. 
A slow-branching labeled ordered tree is of \emph{degree $k$}
if it contains $k$ branching nonterminal nodes, $k \geq 1$.
\end{ndefinition}

It is obvious that the newly provided Definition \ref{def:alt} is insufficient to prove that a grammar restricted by a slow-branching derivation tree is $k$-linear. However, it is possible to apply different restrictions to  Definition \ref{def:alt} with its own advantages or demonstrate similar result to Theorem \ref{th:L:kL} to prove that it is $k$-linear. Such a discovery would require further studies.

\section*{Acknowledgments}
This work was supported by the BUT grant FIT-S-23-8209.

\nocite{*}
\bibliographystyle{eptcs}
\bibliography{generic}
\end{document}